\documentclass[11pt, letterpaper, romanappendices, onecolumn]{article}

\usepackage[T1]{fontenc}
\usepackage[cmex10]{amsmath}
\usepackage{amsfonts}
\usepackage{amssymb}
\usepackage{amsthm}
\usepackage{graphicx}
\usepackage{lmodern}
\usepackage{natbib}
\usepackage{hyperref}
\usepackage{url}
\usepackage{color}
\usepackage{fullpage}

\theoremstyle{plain}% default
\newtheorem{thm}{Theorem}[section]

\newtheorem{prop}[thm]{Proposition}

\theoremstyle{definition}
\newtheorem{defn}{Definition}[section]

\theoremstyle{remark}

\title{Some Efficient Solutions to Yao's Millionaire Problem}
\author{%
	Ashish Kumar\footnote{Computer Science and Engineering,
	Indian Institute of Technology Jodhpur, India. Email: \texttt{ashishkumar@iitj.ac.in}},
	Anupam Gupta\footnote{Center of Excellence -- ICT, Indian Institute of Technology Jodhpur, India. Email: \texttt{ag@iitj.ac.in}}
}

%\interdisplaylinepenalty=2500

\setlength{\parindent}{0cm}
\setlength{\parskip}{10pt plus 5pt minus 3pt}

\begin{document}
	\maketitle
\begin{abstract}
We present three simple and efficient protocol constructions to solve Yao's Millionaire Problem when the parties involved are non-colluding and semi-honest. The first construction uses a partially homomorphic Encryption Scheme and is a $4$-round scheme using $2$ encryptions, $2$ homomorphic circuit evaluations (subtraction and XOR) and a single decryption. The second construction uses an untrusted third party and achieves a communication overhead linear in input bit-size with the help of an order preserving function.Moreover, the second construction does not require an apriori input bound and can work on inputs of different bit-sizes. The third construction does not use a third party and, even though, it has a quadratic communication overhead, it is a fairly simple construction.
\newline 
\par \noindent \textbf{Keywords:} Secure two-party computation, Cryptography, Security
\end{abstract}

\section{Introduction}
Alice (owning a private variable $x$) and Bob (owning a private variable $y$) wish to determine the truth value of the predicate ``$x > y$'' without disclosing any of their private data except for what is implied by the result. This article details the attempts to solve this problem efficiently by using techniques different from the ones already presented. This problem is famously known as \textit{the Millionaire Problem} in the literature. The solution to the millionaire problem finds applications in e-commerce and data mining; and also forms a sub-procedure in solving \textit{the vector dominance problem} \cite{du2001study}.

\par The remainder of the article discusses the related research in this area (Section \ref{sec2}) and presents three new constructions -- termed `\textsf{A}', `\textsf{B}' and `\textsf{C}' (Section \ref{sec3}). Each protocol starts with definitions, terminologies and constructions used, followed by the protocol, its complexity and a brief security analysis. Any other added details in a protocols are included thereafter. Finally, in Section \ref{sec4}, we conclude and mention some directions that can be taken to further this research.

\section{Background and Related Work} \label{sec2}
\citet{yao1982protocols} gave the first protocol for solving the secure comparison problem. However, the solution was exponential in time and space requirements. In 1987, \citet{goldreich1987play} proposed a solution using scrambled circuits to any secure multiparty computation problem.

\par \citet{beaver1990round} presented a constant-round solution for multi-party computation. \citet{crypto19871119},  \citet{beaver1990multiparty}, and \citet{goldwasser1991fair} have studied fairness in two-party computations with extension to multiparty computation. \citet{feige1994minimal} study the multi-party secure computation models in the presence of an untrusted third party which does not learn anything about the inputs. \citet{cachin1999efficient} also presented an elegant and practical solution to the millionaire problem using an untrusted third party $T$ based on the $\Phi$-hiding assumption.

\par \citet{schoenmakers2004practical} used \textit{threshold homomorphic encryption schemes} to solve the multiparty computation problem. \citet{lin2005efficient} proposed a two-round protocol for solving the millionaire problem in the setting of semi-honest parties using \textit{multiplicative or additive} homomorphic encryption schemes. \citet{fischlin2001cost} constructed a two-round protocol for the millionaire problem using the Goldwasser-Micali encryption scheme (The computation cost of the protocol is $O(\lambda~n)$ modular multiplications, where $\lambda$ is the security parameter and the communication cost is $O(\lambda~n \log N)$, where $N$ is the modulus). \citet{blake2004strong} presented a two-round protocol for the problem using the \textit{additive homomorphic Paillier cryptosystem}. Its computation cost is $O(n \log N)$ and the communication cost is $O(n \log N)$. A Symmetric cryptographic solution to the millionaires problem and evaluation of secure multiparty computations was presented by \citet{shundong2008symmetric}. In 2003, \citet{ioannidis2003efficient} proposed an efficient protocol for the problem, having a suboptimal time and communication complexity. In 2009, \citet{gentry2009fully} proposed the first fully homomorphic encryption scheme which allows one to compute arbitrary functions over encrypted data.

\section{Construction of the Protocols}\label{sec3}
We work out three protocol constructions that are efficient and simple.

\subsection{Protocol \textsf{A} Construction}
\textsc{Assumption:} We assume the existence of an efficient partially homomorphic encryption. By partial, we mean that the homomorphic encryption should have the properties of additivity and bit-wise XOR.

\par Alice owns a private value $a$ and Bob owns the private value $b$. Here we represent $a$ and $b$ in two's compliment form such that the sign of any integer (represented as a binary string) is stored as the most significant bit of its corresponding binary string. Also, all XOR operations performed are bitwise.

\begin{enumerate}
	\item Bob owns a Homomorphic public key encryption scheme $(E, D)$.
	\item Bob now computes $E(b)$ and sends it to Alice.
	\item Alice generates a random number $R$ and computes:
	\begin{enumerate}
		\item $E(a)$,
		\item then $V = (E(a)-E(b)) \oplus E(R)$
	\end{enumerate}
	and sends $V$ to Bob. Note that $V = E((a-b)\oplus R)$ by the properties of homomorphic encryption.
	\item Bob decrypts $V$ to obtain $(a-b) \oplus R$ and sends the most significant bit (MSB) of the decrypted value to Alice -- it contains the information about the sign of the operation $(a - b)$.
	\item Alice then takes the XOR of the obtained bit from Bob and the MSB of $R$ to obtain the output: if $a > b$.
\end{enumerate}

\subsubsection{Analysis}
\par The security analysis of Protocol \textsf{A} is trivial and is based on the security of the corresponding homomorphic encryption scheme and the fact that one time pads are secure for a single use of a key.

\par \textsc{Communication Overhead}: Let the input with larger number of bits be $a$. Then the communication overhead can be seen to be: $2|E(a)| + 2$.

\par \textsc{Computation}: The protocol is efficient. The computation overhead is: $2$(Complexity of encrypting a variable) + (Complexity of subtraction in the homomorphic scheme) + (Complexity of XOR in the homomorphic scheme) + (Complexity of XOR of 2 bits (step 5 of the protocol)). The first term depends on the encryption scheme used; the second and third term depend on the homomorphic scheme used and the last term is a constant.

\subsection{Protocol \textsf{B} Construction}
As we will formalize the protocol, we will also note that even if a single party, out of the two involved parties, is computationally powerful, the protocol will be able to exploit the computation power of that party to perform the computationally expensive tasks in the protocol.
\par We begin with the following definition.
\begin{defn}
(\textbf{Order Preserving Function:}) A function $F: A\rightarrow B$ (where $|A| \ll |B|$) is called an order preserving function if $a < b$ $\Longleftrightarrow F(a) < F(b)$, $\forall a, b \in A$.
\end{defn}
For example, $f(x) = 2x$ is, clearly, an order preserving function.
\newline

\subsubsection{General Idea of Protocol \textsf{B}}
\par Suppose Alice and Bob have some random order preserving function $F: A\rightarrow B$, unknown to a third party. Later we'll see how to construct such an order preserving function.

\par Now if Alice and Bob send $X = F(a)$ and $Y = F(b)$ to Ursula (the third party), then $a > b$ \textit{iff} $X > Y$ -- follows from the definition of order preserving function.
\newline

\subsubsection{Construction of a Random Order Preserving Function, $F$}
\par Let the input size be $n$ bits, i.e., $n = \max(\lceil \log a \rceil, \lceil \log b \rceil)$. As we will present the construction, we will note that that construction need not know the input sizes in advance and neither does it
require the inputs to have the same bit-sizes.

\par Then to construct $F$, we consider $n$ random functions: $f_1, f_2, \ldots f_n$, where each $f_i: \lbrace 0,1\rbrace \rightarrow\mathbb{N}$ is used to encode the $i^\text{th}$ least significant input bit.

Let us assume that the mappings follow the following constraints:
\begin{eqnarray}
(f_i(1)- f_i(0)) &>& \displaystyle\sum\limits^{i-1}_{j=1} (f_j(1) - f_j(0)),\\
f_i(1) &>& f_i(0) \qquad \forall i.
\end{eqnarray}

\par We can always find such values for $f_1, f_2, f_3 \ldots f_n$ satisfying the above constraints. For a sample construction of $F$, refer to Section \ref{excons}
\par Let $a_i$, for $1 \le i \le n$, denote the $i^\text{th}$ least significant bit in the binary representation of an input $a$. Then define
\begin{equation}
F(a) := f_n(a_n) + f_{n-1}(a_{n-1}) + \ldots + f_2(a_2) + f_1(a_1),
\end{equation}
where, $F:\lbrace 0,1 \rbrace^n \rightarrow\mathbb{N}$.

\begin{thm}\label{ope}
$F$ is an order preserving function.
\end{thm}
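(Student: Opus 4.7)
The plan is to show both directions of the biconditional $a < b \iff F(a) < F(b)$ by a standard ``most significant differing bit'' argument, leveraging constraint (1) as the key domination bound.

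First I would fix $a, b \in \{0,1\}^n$ with $a \neq b$ and let $k$ be the largest index such that $a_k \neq b_k$. Without loss of generality I would handle the case $a < b$ first; by the usual lexicographic characterization of binary comparison, $a < b$ implies $a_k = 0$ and $b_k = 1$. Then I would write
\begin{equation*}
F(b) - F(a) = \bigl(f_k(1) - f_k(0)\bigr) + \sum_{i=1}^{k-1} \bigl(f_i(b_i) - f_i(a_i)\bigr),
\end{equation*}
since the indices $i > k$ contribute zero by the choice of $k$.

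The main step is to bound the ``lower-order noise'' term. Using constraint (2), for every $i < k$ we have $|f_i(b_i) - f_i(a_i)| \le f_i(1) - f_i(0)$, regardless of the individual bit values. Hence
\begin{equation*}
\sum_{i=1}^{k-1} \bigl(f_i(b_i) - f_i(a_i)\bigr) \ge -\sum_{i=1}^{k-1} \bigl(f_i(1) - f_i(0)\bigr),
\end{equation*}
and then constraint (1) gives $\sum_{i=1}^{k-1}(f_i(1) - f_i(0)) < f_k(1) - f_k(0)$, so $F(b) - F(a) > 0$. This establishes $a < b \Rightarrow F(a) < F(b)$.

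For the converse, I would simply observe that the roles of $a$ and $b$ are symmetric in the argument above, so $b < a$ would yield $F(b) < F(a)$, and $a = b$ trivially yields $F(a) = F(b)$; by contrapositive, $F(a) < F(b)$ forces $a < b$. I do not expect any real obstacle here — the only thing to be careful about is correctly identifying the ``most significant differing bit'' and ensuring constraint (2) is used to turn an arbitrary signed difference into a one-sided bound so that constraint (1) can be applied cleanly.
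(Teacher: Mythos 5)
Your proof is correct and follows essentially the same route as the paper's: both isolate the most significant differing bit, cancel the higher-order contributions, and use constraint (2) to bound the lower-order terms so that constraint (1) dominates the sum. The only cosmetic differences are that you phrase the worst case as a term-wise absolute-value bound rather than the paper's explicit extremal bit-pattern maximization, and you derive the converse direction via trichotomy rather than the paper's ``analogous'' symmetric argument.
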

\begin{proof}
Let $x_0,y_0\in\mathbb{N}$. Assume, without the loss of generality, $x_0 > y_0$. Let $x,y$ correspond to the binary representation of $x_0,y_0$ respectively. Here $x,y \in \lbrace0,1\rbrace^n$. We show that $F(x) > F(y)$, where $F$ is constructed as above. We denote the $i^\text{th}$ bit (starting from the least significant bit numbered $1$) of a bit string $x$ by $x_i$. Since $x > y$, $x$ must have a $1$ at the first bit position at which $x$ and $y$ \textit{differ}, starting from the most significant bit. Let this position be $p$. By the construction of the function $F$, we have:
\begin{equation}
(f_p(1)- f_p(0)) > \displaystyle\sum\limits^{p-1}_{j=1}(f_j(1)-f_j(0)).	\label{by_cons}
\end{equation}

\par Now,
\[
	F(x) = f_n(x_n) + f_{n-1}(x_{n-1}) + \ldots + f_1(x_1);
\]
and 
\[
	F(y) = f_n(y_n) + f_{n-1}(y_{n-1}) + \ldots + f_1(y_1).
\]

So,
\[
	F(x) - F(y) = f_p(1)-f_p(0) + \displaystyle\sum\limits_{j=1}^{p-1}(f_j(x_j) - f_j(y_j)),
\]
Since, all bits at positions higher than $p$ are equal for $x$ and $y$ and hence, their mappings being equal, cancel out.

\par Thus for $F(x) > F(y)$, we require
\begin{eqnarray}
	f_p(1)-f_p(0) &+& \displaystyle\sum\limits_{j=1}^{p-1}(f_j(x_j) - f_j(y_j)) > 0 \\
	\Rightarrow \qquad f_p(1)-f_p(0) &>& \displaystyle\sum\limits_{j=1}^{p-1}(f_j(y_j) - f_j(x_j)). \label{eq:3.1.6}
\end{eqnarray}

To show that inequality \eqref{eq:3.1.6} holds, we show that it holds for the maximum value of the RHS. RHS is maximum when we have $x_k = 0,~\forall~k < p$ and $y_k = 1,~\forall~k < p$ (this is because if we have $x_k=y_k=0$ or $x_k=y_k=1$, then $f_k(y_k)-f_k(x_k)=0$, and since $f_k(1) > f_k(0) ~\forall~k$, by construction; hence for $x_k=0,y_k=1$, we have $f_k(y_k)-f_k(x_k)>0$, i.e. it has a positive contribution to the sum on the R.H.S). So, we require that:
\[
f_p(1)- f_p(0) > \displaystyle\sum\limits^{p-1}_{j=1}(f_j(1)-f_j(0)),
\]
which we know is true by the construction of $F$ (see equation \eqref{by_cons}).

The proof for $x<y$ $\Longrightarrow$ $F(x)<F(y)$ is analogous. This completes the proof. 
\end{proof}

\subsubsection{An Example Construction of $F$} \label{excons}
Let us consider an order preserving function $F: \lbrace 0,1 \rbrace^4 \rightarrow \mathbb{N}$.

\par An order preserving function $F$ can be thought as a sequence of functions mapping each bit position, i.e., 
\[
	F = \lbrace f_{1} , f_{2}, f_{3}, f_{4}\rbrace
\]
where $f_i$ represents the mapping corresponding to the $i^\text{th}$ bit starting from the least significant bit (which is the first bit.)
\newline

\par While constructing the functions $f_i$, we need to take care of the following constraints:
\begin{eqnarray*}
	f_{i}(1) - f_{i}(0) &>& \displaystyle\sum\limits^{i-1}_{j=1}(f_{j}(1)-f_{j}(0)),\\
	f_{i}(1) &>& f_{i}(0) \qquad  \forall i.
\end{eqnarray*}

\par The computation of $f_i$s is tabulated in Table \ref{tab:1}

\begin{table}
	\begin{center}
	\begin{tabular}{|c|c|c|c|c|}
	\hline 
		$f_{i}$ & $f_{i}(0)$ & $f_{i}(1)$ & $f_{i}(1) - f_{i}(0)$ & $\sum\limits_{j=1}^{i} f_{j}(1)-f_{j}(0)$ \\\hline 
		$f_{1}$ & 3 & 5 & 2 & 2 \\\hline
		$f_{2}$ & 7 & 10 & 3 & 5 \\\hline
		$f_{3}$ & 1 & 8 & 7 & 12 \\\hline
		$f_{4}$ & 4 & 18 & 14 & 26 \\\hline
	\end{tabular}
	\end{center}
	\caption{Computation of $f_{i}$} \label{tab:1}
\end{table}

Table \ref{tab:2} tabulates the image of all 4 bit numbers using the defined mapping $F$. 

\begin{table}
	\begin{center}
	\begin{tabular}{|c|c|}\hline 
		Input ($x$) & $F(x)$ \\\hline 
		0000 & 15 \\\hline 
		0001 & 17 \\\hline
		0010 & 18 \\\hline 
		0011 & 20 \\\hline
		0100 & 22 \\\hline
		0101 & 24 \\\hline
		0110 & 25 \\\hline
		0111 & 27 \\\hline
		1000 & 29 \\\hline
		1001 & 31 \\\hline
		1010 & 32 \\\hline
		1011 & 34 \\\hline
		1100 & 36 \\\hline
		1101 & 38 \\\hline
		1110 & 39 \\\hline
		1111 & 41 \\\hline 
	\end{tabular} 
	\end{center}
	\caption{Computing $F$} \label{tab:2}
\end{table}

\par \textbf{NOTE:} While constructing the function $F$, we can independently choose $f_{i}(0)$ for every $i$ and then we can accordingly choose $f_{i}(1)$ to satisfy the constraints mentioned.

\subsubsection{Constructing Shared Function} \label{shared}
\par Consider an order preserving function $F$. We will show one possible construction of $F$ and argue its correctness thereafter.
Here, $F(x)=F(x_n\ldots x_1)$ is defined as:
\[
	F(x_n\ldots x_1) = f_n(x_n) + f_{n-1}(x_{n-1}) + \ldots +f_2(x_2) + f_1(x_1),
\]
\par where the values of $f_i(0)$ and $f_i(1)$ are:
\[
	f_i(0)=s \quad \forall i
\]
\[
	f_i(1)=s+k^il \quad \forall i
\]
\par \quad where $s,k (>1)$ and $l$ are randomly chosen constants, whose choice is critical to the protocol \ref{protb}.

\begin{prop}
$F$ is an Order Preserving function.
\end{prop}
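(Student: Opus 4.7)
The plan is to reduce the proposition to Theorem \ref{ope}, so that it suffices to check that the explicit choice $f_i(0) = s$, $f_i(1) = s + k^i l$ satisfies the two defining inequalities used there in the construction of $F$, namely $f_i(1) > f_i(0)$ and $f_i(1) - f_i(0) > \sum_{j=1}^{i-1}(f_j(1) - f_j(0))$. Once both are verified, Theorem \ref{ope} gives the conclusion immediately, and no direct comparison between $F(x)$ and $F(y)$ is needed.

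The pointwise inequality $f_i(1) > f_i(0)$ reduces to $k^i l > 0$, which holds provided $l > 0$ (together with $k > 1$, which is already stipulated). I would dispose of this in a single line.

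The substantive check is the cumulative inequality: after substituting the definitions and cancelling the common factor $l$, it becomes $k^i > \sum_{j=1}^{i-1} k^j$. The right-hand side is a geometric sum equal to $(k^i - k)/(k - 1)$, so I would rearrange to the equivalent condition
\[
    k^{i+1} - 2 k^i + k > 0, \qquad \text{i.e.,} \qquad k^{i-1}(k - 2) + 1 > 0.
\]
This is where the argument becomes delicate, and I expect it to be the main obstacle: the inequality is clearly satisfied for every $i \ge 1$ when $k \ge 2$, but for $1 < k < 2$ the term $k^{i-1}(k - 2)$ is negative and grows in magnitude with $i$, so it can fail once $i$ is large enough (concretely, once $k^{i-1} > 1/(2-k)$). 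I would therefore either strengthen the implicit hypothesis on $k$ to $k \ge 2$, or explicitly bound the input bit-length $n$ as a function of $k$ so that the inequality persists at every index used. With either restriction in place, both required constraints hold and Theorem \ref{ope} yields that $F$ is order preserving, completing the proof.
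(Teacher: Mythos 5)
Your reduction to Theorem \ref{ope} is exactly the paper's route: its proof also invokes that theorem and then verifies the cumulative constraint, which after cancelling $l$ reads $k^i > \sum_{j=1}^{i-1} k^j = (k^i-k)/(k-1)$. The difference is in the last step, and there you are right and the paper is wrong. The paper asserts this inequality ``follows from the fact that $k>1$''; as you note, it is equivalent to $k^{i-1}(k-2)+1>0$, which fails for $1<k<2$ as soon as $k^{i-1}>1/(2-k)$. This is not a hypothetical worry: take $k=3/2$, $l=16$, $s$ arbitrary. Then the increments $f_i(1)-f_i(0)=k^i l$ are $24, 36, 54, 81$, so $f_4(1)-f_4(0)=81<114=\sum_{j=1}^{3}(f_j(1)-f_j(0))$, and consequently $F(1000)<F(0111)$ although $1000_2>0111_2$: this $F$ is not order preserving. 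So the proposition is false for general real $k>1$, and your strengthening to $k\ge 2$ (or a bound on the bit-length in terms of $k$) is a necessary correction, not a defect of your argument. (A charitable reading rescues the paper: since each $f_i$ must map into $\mathbb{N}$ for every $i$, the parameter $k$ is effectively forced to be an integer, and an integer $k>1$ is automatically $\ge 2$; but this is nowhere made explicit, and the stated justification is incorrect as written. The same faulty ``$k>1$'' step occurs a second time in the paper's proof, in the inequality $(k^{j+1}-k)l>\frac{(k^{j+1}-k)l}{k-1}$, which likewise requires $k\ge 2$.)

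One scope difference you should be aware of: because $f_i(0)=s>0$, the value of $F(x)$ depends on how many leading zeros of $x$ are included in the sum, so the paper separately proves $F(a)>F(b)$ when $a$ and $b$ are written without leading zeros and $a$ is strictly longer than $b$. That case is needed for the protocol (the two parties' inputs may have different bit-lengths) and does not follow from Theorem \ref{ope} alone, which compares two strings of the same length $n$. Your proof omits it; it goes through for $k\ge 2$ by the same kind of geometric-sum estimate, and you would want to add it if the proposition is read as covering unequal-length inputs.
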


\begin{proof}

\par By theorem \ref{ope}, we only need to show that
\[
	f_i(1)- f_i(0) > \sum\limits^{i-1}_{j=1}(f_j(1)-f_j(0))
\]
or,
\begin{equation}
	f_i(1) > \sum\limits^{i-1}_{j=1}(f_j(1)-f_j(0)) + f_i(0). \label{eq:3.2.7}
\end{equation}

\par We first show how the encryption handles inputs of different bit sizes, provided that they do not contain any leading zeros (i.e. 00101 should only be represented as 101). Without loss of genrality, we can assume that $|a| > |b|$. Then the most significant bit of $a$ (which is 1) has position $i (> j)$, where j is the most significant bit-position of $b$. For the construction to work, we require:
\[
	F(a)>F(b)
\] 
or

\[
	\sum\limits^i_{x=1}f_x(a_x)>\sum\limits^j_{x=1}f_x(a_x)
\] 

\par We show this by plugging the maximum value of RHS and comparing it with the minimum value of LHS (i.e. we take $a=10 \ldots 0$ and $b=11\ldots 1$):

\[
	f_i(1)+\sum\limits^{i-1}_{x=1}f_x(0)>\sum\limits^j_{x=1}f_x(1)
\] 
 
\[
	si + k^il>\sum\limits^j_{x=1}(s+k^xl)= sj + \frac{(k^{j+1}-k)l}{k-1}
\] 
Which we can easily see by combining the following two inequalities:
\[
	si>sj, \quad\text{Since i>j}
\]
And,
\[
	k^il \geq k^{j+1}l, \quad\text{As i>j}
\]

\[
	\Longrightarrow k^il>(k^{j+1}l-kl)>\frac{(k^{j+1}-k)l}{k-1} \quad\text{Since k>1}. 
\]

\par We now prove the same for equal bit sizes (where we assume that both the inputs $-$ $a$ and $b$ differ at bit-position $i$, when we start scaning from the MSB and since we have assumed that $a>b$, $a$ has a 1 at position $i$ and $b$ has a 0 at position $i$) and show that inequality \eqref{eq:3.2.7} holds for the maximum value of the RHS (i.e. all bit-positions of $a$ less significant than $i^{th}$ are 0 and all bit-positions of $b$ less significant than $i^{th}$ are 1 ). So, plugging in the values in inequality \eqref{eq:3.2.7}, we require

\[
	f_i(1) > \sum\limits^{i-1}_{j=1}(s+k^j l- s) + s,
\]

\begin{equation}
	f_i(1) > s+\frac{(k^i-k)l}{k-1} \label{eq:3.2.8}
\end{equation}

\par But,

\[
	f_i(1) = s+k^i l > s+\frac{(k^i-k)l}{k-1}, \quad\text{(In our construction)}
\]
\par \quad where the second inequality follows from the fact that $k>1$.

\par Thus the required inequality \eqref{eq:3.2.8} is satisfied by our construction.

\par Since the above argument is true for all $i$, we have shown that $F$ is an Order Preserving function.
\end{proof}

\par\textbf{ Note: Our construction works well even if the bit-size of the inputs are unequal and not known apriori. We also do not require any bound on input for the function since the function is defined for bit positions upto $\infty$.}

\subsubsection{Formalizing Protocol \textsf{B}} \label{protb}
\begin{enumerate}
	\item Alice generates $l, s, k$ $-$ the parameters for construction of $F$, and communicates it to Bob.
	\item Alice then flips an unbiased coin and obtains: bit $u \in \lbrace 0,1\rbrace$, and sends it to Bob.
	\item If $u=1$, they use their inputs as is, otherwise, if $u=0$, they take the bit-wise complement of their inputs and use the complemented inputs in the protocol instead of the original inputs, thus reversing the order of their inputs.
	\item Alice and Bob now construct the function $F$ using the common parameters: $s,l,k,u$ and generate $F(a)=A$ and $F(b)=B$; where $a$ is the binary representation of Alice's private input and $b$ is the binary reprensation of Bob's private input (or the bitwise complements of their inputs, depending on the value of $u$); and send it to Ursula.
	\item Ursula compares $A$ and $B$ and tells the result to Alice and Bob, where Alice and Bob obtain the protocol outcome depending on the value of the unbiased bit $u$. If $u=0$, they reverse the result obtained by Ursula to get the correct outcome; else they take the outcome to be what Ursula tells them.
\end{enumerate}

The correctness of the protocol follows from previous discussions.

\textbf{Note: Here, as wee can see that all the computationally expensive tasks are being done by Alice, thus, in real time applications, when using the protocol with one of the two parties being a resourceful server, we can assign the role of Alice to that server and use its computation power to perform all the computationally expensive tasks, making the protocol even more computationally efficient and practical.}

\subsubsection{Analysis}
\par Note that
\[
	\max \lbrace F(x) \rbrace = \sum \limits_{j=1}^{n} s+k^i l = ns + \frac{(k^{n+1}-k)l}{k-1},
\]
for $x \in \lbrace 0,1 \rbrace^n$.

\par Thus, maximum number of bits in $F(x) = log(ns + \frac{(k^{n+1}-k)l}{k-1})=O(nlogk + logl + logs + logn)$. Thus size of $F(x)$ is linear in $n$ and requires $O(n)$ bits to communicate it to the untrusted third party (where $k,l,s$ are randomly chosen constants).

\par \textsc{Communication Overhead:} The protocol requires communicating: $s, l, k, u$ to Bob (by Alice) ($O(logs+logl+logk+1)$ bits); $F(x)$ and $F(y)$ (both $O(n)$ bit long) to Ursula; and sending the result back to Alice and Bob, using a single bit for each. Hence, the overall overhead is linear in $n$.

\par \textsc{Computational Overhead:} $F(x)$ is $O(n)$ time computable. Thus the overall computational overhead is $O(n)$ + (computational overhead of generating an unbiased bit).

\par \textsc{Security:}\label{issue}
\begin{enumerate}
	\item The third party doesn't know the value of $s, l$ or $k$, and the only value that Ursula obtains is $F(a)$ and $F(b)$. Here, $F(a)$ is of the form:

\[
	F(a)=\sum \limits_{j\in U}s + \sum \limits_{j\in V}(s+k^jl)=ns + \sum \limits_{j\in V}k^jl;
\]
where $U=\{i:a_i=0\}$ and $V=\{i:a_i=1\}$. 
\par Thus, the securty of individually guessing $F(a)$ or $F(b)$ depends on the security of $n$, which is unknown since protocol does not require input number of bits apriori and neither does it require them to be equal; and also depends on the hardness of factoring, as the algorithm would require the factoring of the second summand to obtain $k,l$; which are chosen to be large to make the protocol secure. Even before that, Ursula would need to guess $ns$ from $F(a)$, for which there are exponential number of possibilities (because $F(a)$ is $O(n)$ bits and thus its value can be $0$ to $2^n$ bits long) and hence is exponentially hard to guess. Another information that Ursula can obtain is $F(a)-F(b)$:
\[
	F(a)-F(b)=\sum \limits_{j\in V_a}k^jl-\sum \limits_{j\in V_b}k^jl= constant \times l;
\]
The security of $F(a)-F(b)$ is dependant on the difficulty of factoring of the R.H.S value to obtain $l$, which would be difficult to guess if $l$, $k$ (which determines the $constant$) are large composite numbers making it difficult to factor R.H.S at first place and on top of that, guess which factor of the value equals $l$.
	\item Since the values sent to the third party may be order preserved or order reversed with equal probability, the probability distribution is also evenly distributed for lower as well as higher values. The third party also cannot gain information about the minimum or maximum value of Alice's or Bob's private input (which could possibly have been given away by $F(x)$), since the event that $\{x > F(x)\}$ and the event $\{x < F(x)\}$ are equally likely in general and in our protocol as well (since the bit $u$ is unbiased).
	\item Since $F(x)$ and $F(y)$ have atleast as many elements between them as $x$ and $y$ ($F$ is an order preserving function), this gives away information regarding the maximum gap between $x$ and $y$ (via the difference between $F(x)$ and $F(y)$). Although this data is not statistically useful in many scenarios, we can still patch it with an added communication overhead as shown later in the extension to this construction in Section \ref{sec:ext}.
\end{enumerate}

\subsection{Extension of Protocol \textsf{B}}\label{sec:ext}
The construction of $F$ in section \ref{shared} can be modified to make $k$ an $O(n)$-bit long value so that the gap between the mapping of the numbers under $F$ which differ only in their least significant bit (this gap corresponds to the minimum achievable gap between two input numbers) has a gap of atleast $2^n$ and hence gives away no relevant statistical data about the input ( Since for n-bit numbers, knowing that there are $2^n$ elements between them does not enable us to narrow down on any value or a range of values).This resolves the issue mentioned earlier in Section \ref{issue}.

\par Clearly, the maximum number of bits to be transmitted in this case is:
\[
	O(log(\max \lbrace F(x) \rbrace)) = O(log(\sum \limits_{j=1}^{n} s+k^i l)) =O(ns + \frac{(k^{n+1}-k)l}{k-1})=O(nlogn),
\]
for $k,l,s=O(n)$.

\par Thus, this new protocol has a linearithmic communication overhead. Its computation overhead is same as before.

\subsection{Protocol \textsf{C} Construction}
\begin{defn}
\textbf{Order Preserving Function at a point $b$:} A function $F: A \rightarrow B$ is called an order preserving function at point $b$ if for every $x \in A$, $x < b\Longrightarrow F(x) < F(b)$ and $x > b \Longrightarrow F(x) > F(b)$. 
\end{defn}

\par Here we present one possible construction of an \textit{Order Preserving Function at a point $b$}. 

\par The function $F$ generates the output of its input $a$ bit-wise, with each input bit $a_i$ (where $a_i$ represents the $i^{th}$ bit starting from the least significant bit numbered 1, in the binary representation of the input $a$) being mapped to its output by a random function: $f_i : \lbrace 0,1 \rbrace \longrightarrow \mathbb{N}$.
\begin{defn}\label{definerise}
We define a set $S=\lbrace i: b_i = 0 \rbrace$, ($b_i$ denotes the $i^\text{th}$ least significant bit in the binary representation of the value  $b$). Then, we define a $\mathbf{(rise)_i}$ to be $[f_i (1) -f_i(0)], \quad \forall i \in S$. 
\end{defn}

\par \textbf{Note:} $\mathbf{(rise)}$ refers only to the elements in $S$.
 
\begin{defn}\label{definefall}
Similarly, we define a set $Y = \lbrace i:b_i =1 \rbrace$, ($b_i$ denotes the $i^\text{th}$ least significant bit in the binary representation of the value  $b$). Then, we define a $\mathbf{(fall)_i}$ to be $[f_i (1) -f_i(0)], \quad \forall i \in Y$.
\end{defn}

\par \textbf{Note:} $\mathbf{(fall)}$ refers only to the elements in $Y$.

\subsubsection{Construction of $F$}\label{ConsC}
\begin{enumerate}
	\item We randomly map $f_i (b_i)$ to an element in $\mathbb{N}$, where $b_i$ corresponds to the $i^{th}$ least significant bit in the binary representation of the input $b$.
	\item Now, $\forall i \in Y$, we randomly select a mapping for $f_i (0)$ in $(f_i (1) - \mathbf{v_i} - l, f_i (1)-\mathbf{v_i})$, where $l$ is a parameter randomly chosen by the constructor of the function, and
	\[
		\mathbf{v_i} = \sum\limits^{}_{\forall j \in S \text{ s.t. } j < i} \mathbf{(rise)_j}. \label{eqn rise}
	\]
	\item Similarly, $\forall i \in S$, we randomly select a mapping for $f_i (1)$ in $(f_i (0) + \mathbf{u_i}, f_i(0)+ \mathbf{u_i} + l)$; (where $l$ is the parameter chosen in step 2), and
	\[
		\mathbf{u_i} = \sum\limits^{}_{\forall j \in Y \text{ s.t. } j < i} \mathbf{(fall)_j}.
	\]
	\item We have our construction ready.
	\[
		F(x) = F(x_n\ldots x_1)=f_n (x_n) + f_{n-1} (x_{n-1}) + \cdots + f_1(x_1).
	\]
\end{enumerate}

\subsubsection{Proof of Correctness}
\par Here we only prove that for all $x\in \mathbb{N}$, $x < b\Longrightarrow F(x) < F(b)$, where $F$ is an order preserving function at $b$. The proof for ``$x > b \Rightarrow F(x) > F(b)$'' is analogous.

\par If $x < b$, then at the first bit (starting from MSB) at which $x$ and $b$ differ, $x$ must have a $0$ and $b$ must have a $1$. Let this position be $t$ (w.r.t. the position of the LSB, which is 1). Clearly $t\in Y$, where $Y$ is defined in definition \ref{definefall} 

\par The critical case or the worst case, proving which will prove all other cases, is the one where all bits of $x$ which are at positions less significant than $t$ equal $1$, i.e. $x_i=1 \quad \forall i<t$. Also, the bit-positions more significant than position $t$ are the same in both $x$ and $b$, since $t$ is chosen to be the first bit-position at which $x$ and $b$ differ.Thus,the only positions at which $x$ differs from $b$ are the positions less significant than $t^\text{th}$ bit at which bits of $b$ are $0$ and bits of $x$ are $1$ and all such positions correspond to a ``rise'', forming a subset of $S$ [$S$ is defined in definition \ref{definerise}](since other bits in $x$ and $b$ which are less significant than $t^\text{th}$ bit are both 1 and hence identical).

\par By construction, we have
\begin{flushleft}

$\mathbf{(fall)_t}=f_t(1)-f_t(0)$

$\Longrightarrow\mathbf{(fall)_t}> f_t(1)-(f_t(1)-v_t)=v_t $,\quad plugging in the maximum value of $f_t(0)$ from constraint \ref{eqn rise}

$\Longrightarrow\mathbf{(fall)_t}> \sum\limits^{}_{\forall j \in S, j<t}\mathbf{(rise)_j}$, \quad By deinition of $v_i$. [refer \ref{eqn rise}].

$\Longrightarrow f_t(1) - f_t (0) > \sum\limits^{}_{\forall j \in S, j<t} (f_j(1) - f_j(0))$, \quad By definition of \textbf{(rise)$_j$} [refer \ref{definerise}]

$\Longrightarrow f_t(1) + \sum\limits^{}_{\forall j \in S, j<t} f_j(0) > f_t(0) + \sum\limits^{}_{\forall j \in S, j<t} f_j(1)$, \quad By rearranging the terms

\par Let $G=\{i: i>t$ or $(i<t$ and $i\in Y)\}$. These are precisey the positions not covered in the above inequality. Since all these bit-positions are identical in the binary representation of both $x$ and $b$, i.e. $G$ also equals the set: $\{i:b_i=x_i\}$, thus

$	\sum\limits^{}_{j\in G}f_j(b_j)=\sum\limits^{}_{j\in G}f_j(_j)$

Adding the above equality to the previous inequality, we get

$f_t(1) + \sum\limits^{}_{\forall j \in S, j<t} f_j(0)+\sum\limits^{}_{j\in G}f_j(b_j) > f_t(0) + \sum\limits^{}_{\forall j \in S s.t.j<t} f_j(1) + \sum\limits^{}_{j\in G}f_j(x_j)$

$\Longrightarrow \sum\limits_{\forall j}f_j(b_j) > \sum\limits_{\forall j}f_j(x_j)$, 

$\Longrightarrow F(b) > F(x)$, 

\end{flushleft}

Thus, our proof is complete.

\subsubsection{Sample Construction}
\par Let the input bound on bit size be $4$. Let the function $F$ be constructed w.r.t. $1001$ as per the construction procedure in Sec \ref{ConsC} . We have the following table
\begin{center}
	\begin{tabular}{|c|c|c|c|c|}\hline
	$f_i$ & $f_i(0)$ & $f_i(1)$ & $\mathbf{(rise)_i}$ & $\mathbf{(fall)_i}$\\\hline
	$f_0$ & 13 & \textbf{25} & - & 12 \\\hline
	$f_1$ & \textbf{36} & 54 & 18 & - \\\hline
	$f_2$ & \textbf{30} & 43 & 13 & - \\\hline
	$f_3$ & -32 & \textbf{1} & - & 33 \\\hline 
	\end{tabular}
\end{center}
The values in boldface represent the values which have been chosen uniformly at random (since they correspond to the bits of $1001$).

\par We thus have the following table for the function $F$.
\begin{center}
\begin{tabular}{|c|c|}
\hline 
$x$ & $F(x)$ \\ 
\hline 
0000 & 47 \\ 
\hline 
0001 & 59 \\ 
\hline 
0010 & 65 \\ 
\hline 
0011 & 77 \\ 
\hline 
0100 & 60 \\ 
\hline 
0101 & 72 \\ 
\hline 
0110 & 78 \\ 
\hline 
0111 & 90 \\ 
\hline 
1000 & 80 \\ 
\hline 
\textbf{1001} &\textbf{ 92} \\ 
\hline 
1010 & 98 \\ 
\hline 
1011 & 110 \\ 
\hline 
1100 & 93 \\ 
\hline 
1101 & 105 \\ 
\hline 
1110 & 111 \\ 
\hline 
1111 & 123 \\ 
\hline 
\end{tabular}
\end{center}
Clearly, the function $F$ preserves order at point $b = 1001$ (in boldface). 

\subsubsection{Worst case analysis of $\mid$F(x)$\mid$}
\par  Let $b=b_d b_{d-1}\ldots b_2 b_1$. Let $f_i (b_i),\quad \forall i$ be randomly chosen in the range $[n, m]$. For the worst case analysis, we choose the maximum value for $f_i(0)$ $\forall i \in \mathbf{S}$ and minimum value for $f_i(1)$ $\forall i \in \mathbf{Y}$ from the ranges specified in the construction \ref{ConsC}. Let $| b | = d$. We have the following table which analyzes the maximum number of bits required to encode $F(x)$ for any $x$ (with $x_i$ denoting the $i^{th}$ least significant bit of $x$). For the worst case analysis, we take b=$101010101 \ldots 010$. We first tabulate the data, and explanation follows.
\begin{center}
\begin{tabular}{|c|c|c|c|c|c|c|}\hline
 $f_i$ & $f_i(0)$ & $f_i(1)$ & $\mathbf{(rise)_i}$ & $\mathbf{(fall)_i}$ & $\sum\limits_{\text{upto } i} \mathbf{(rise)_j}$ & $\sum\limits_{\text{upto } i} \mathbf{(rise)_j}$ \\\hline 
 $f_1$ & $m$ & $m+l$ & $l$ & - & $l$ & - \\\hline 
 $f_2$ & $n-l$ & $n$ & - & $l$ & $l$ & $l$ \\\hline 
 $f_3$ & $m$ & $m+2l$ & $2l$ & - & $3l$ & $l$ \\\hline 
 $f_4$ & $n-2l$ & $n$ & - & $2l$ & $3l$ & $3l$ \\\hline 
 $f_5$ & $m$ & $m+4l$ & $4l$ & - & $7l$ & $3l$ \\\hline 
 $f_6$ & $n-4l$ & $n$ & - & $4l$ & $7l$ & $7l$ \\\hline 
 $\vdots$ & $\vdots$ & $\vdots$ & $\vdots$ & $\vdots$ & $\vdots$ & $\vdots$ \\\hline 
 $f_j$ & $m$ & $m+2^{(j-1)/2}l$ & $2^{(j-1)/2}l$ & - & $(2^{(j+1)/2}-1)l$ & $(2^{(j-1)/2}-1)l$ \\\hline 
 $f_{j+1}$ & $n-2^{(j-1)/2}l$ & $n$ & - & $2^{(j-1)/2}l$ & $(2^{(j+1)/2}-1)l$ & $(2^{(j+1)/2}-1)l$ \\\hline 
 $\vdots$ & $\vdots$ & $\vdots$ & $\vdots$ & $\vdots$ & $\vdots$ & $\vdots$ \\\hline 
 $f_d$ & $n-2^{(d-1)/2}l$ & $n$ & - & $2^{(d-1)/2}l$ & $(2^{(d+1)/2}-1)l$ & $(2^{(d+1)/2}-1)l$ \\\hline 
\end{tabular}
\end{center}

\par For the worst case, \\
$f_i(b_i) = n, \quad \forall i \in Y$ and, \\
$f_i(b_i) = m, \quad \forall i \in S$ .

\par Clearly, the worst case number of bits required to represent $F(x)$ (corresponding to $F(1111...1)$) is $O(\log (d/2(m+n)+ \sum\limits_{j=1}^{j= \lfloor d/2 \rfloor} 2^j l)) = O(\log (2^{\lfloor d/2\rfloor+1} + d/2(m+n))) = O(d/2+\log (d/2)) = O(d)$. \par Also note that, since the function preserves order only with respect to the value $b$, thus, it is possible to construct an \textit{order preserving function at $b$}, with an output space having cardinality lesser than that of the input space. For example, the function: $F$ defined as : $F(x)=c_0<b$ $\forall x<b$, and $F(x)=c_1>b$ $\forall x<b$, is also an \textit{order preserving function at point $b$} having an output space of cardinality 2. 

\subsubsection{Formalizing Protocol \textsf{C}}
Protocol \textsf{C} uses 1 out of 2 Oblivious Transfer(OT) protcol as a subroutine. Some details of 1-out-of-2 OT protocol \citet{rabin2005exchange} have been discussed at the end of this subsection . 
\par Assume that Alice owns the private variable $a$, and Bob owns the private variable $b$. Let the bound on the length of the input be $d$ bits.
\begin{enumerate}
	\item Bob constructs the above mentioned function $F$ w.r.t. $b$. For generating the random values of $f_i$s using a single seed $S$, refer to Section \ref{sec:prg}
	\item Alice uses \textit{$1$-out-of-$2$ OT} for each input bit $a_i$ to obtain the bit encoding $f_i (a_i)$ for each $i$ and then computes $F(a) = F(a_n\ldots a_1)=f_n (a_n) + f_{n-1} (a_{n-1}) + \cdots + f_1(a_1)$.
	\item Bob then sends $F(b)$ to Alice, who then compares $F(a)$ and $F(b)$ and tells the result to Bob.
\end{enumerate}

\par The correctness of the protocol follows from the correctness of the construction of the desired function $F$ (w.r.t. $b$).

\par \textit{1-out-2 Oblivious Transfer} : For OT (1 out of 2), sender(S) has two secrets, $m_1$ and $m_2$, and would like the receiver(R) to receive one of them, as per her choice. But, R does not want S to know which secret it chooses and S wants R to know only the secret of her choice, revealing no information about the other secret. In our case, Bob has the secrets $f_i(0)$ and $f_i(1)$, and Alice has a bit $a_i$, corresponding to which she wants the secret $f_i(a_i)$. In the process, Alice wants her bit value to remain a secret and Bob wants Alice to obtain only the mapping corresponding to her bit value, hiding the mapping of the other bit.
\par Practically used 1-out-of-2 oblivious transfer involves four encryptions and two decryptions in a commutative encryption scheme, like RSA. Practically, public-key cryptography is expensive and hence it is preferrable to use a linear (even if large) number of cheap operations (in Oblivious Transfer). 

\subsubsection{Randomly Choosing $f_i$s using a single seed $S$} \label{sec:prg}
\par We first consider a PRG $G: \lbrace 0,1 \rbrace^c \longrightarrow \lbrace 0,1 \rbrace^{2dc}$, where $c$ is the bit size of the seed $S$ used in the generation of random number in $G$, and $d$ is the bound on bit-size of input.

\par We can then split up the output of $G$ into $2d$ pieces each of size $c$ and sequentially use the $2d$ pieces as seeds in a generator $G^\prime : \lbrace 0,1 \rbrace^c \longrightarrow \lbrace 0,1 \rbrace^{O(d)}$ to obtain the random values of $f_i$s in the desired ranges.

The security of this method has been argued in "Foundations of Cryptography-Basic Tools" by \citet{Goldreich01book}.

\subsubsection{Analysis}
\textsc{Communication Overhead:} The protocol uses $d$ parallel rounds of \textit{$1$-out-of-$2$ OT}, each on $O(d)$ bit numbers and requiring a different key. $O(d)$ bits are required in step 3 to communicate $F(b)$ to Alice. A single bit is transferred at the end to convey the result of the protocol (determined by Alice) to Bob. 

\textsc{Computation Complexity:} The protocol requires: $n$ additions (requiring a total of $O(n^2)$) + $2n$ $\times$ (Complexity of generating an $n$-bit random number). Since pseudorandom numbers can be generated efficiently, hence, the computations involved can be done efficiently.

\textsc{Security:}
Alice gets to see $f_i(a_i)$ for all $i$. However, since for each $i$ either $f_i(0)$ or $f_i(1)$ is chosen uniformly at random (according to our construction of F (Section \ref{ConsC})); and also, Alice doesn't know which of $f_i(0)$ or $f_i(1)$ is chosen at random, and neither does she know the range from which the parameters $m,n,l$ are chosen (which are also chosen uniformly at random), hence, it is difficult for her to deduce any information about the Bob's private variable $b$.
\par The security of the protocol also depends on the security of the PRG used and the security of the 1-out-of-2 OT protocol used to transfer $f_i(x_i)$s.

\section{Conclusion and Future Directions}\label{sec4}
\par It has been demonstrated that millionaires problem can be solved using linear communication overhead with an untrusted non-colluding third party and quadratic communication overhead without a third party. However, we still need to concretely prove a lower bound on the communication and computational complexity of solving the millionaire's problem, later extending it to the computation of general functions. A proof of lower bound would give us an insight in the difficulty of securely and jointly computing any function in general and hence would aid us in designing protocols for efficiently computing those functions.

\bibliographystyle{plain}

\end{document}